\documentclass[11pt]{article}

\usepackage[margin=1.1in]{geometry}
\usepackage{amsmath,amssymb,amsthm}
\usepackage{algorithm}
\usepackage{algpseudocode}
\usepackage{enumitem}
\usepackage[colorlinks=true,linkcolor=blue,citecolor=blue,urlcolor=blue]{hyperref}

\newtheorem{theorem}{Theorem}[section]
\newtheorem{lemma}[theorem]{Lemma}

\newtheorem{corollary}[theorem]{Corollary}

\theoremstyle{definition}
\newtheorem{definition}[theorem]{Definition}

\newcommand{\Z}{\mathbb{Z}}
\newcommand{\N}{\mathbb{N}}

\newcommand{\Mtil}{\widetilde{M}}

\newcommand{\D}{\mathcal{D}}

\usepackage{mathtools}
\usepackage{cleveref}
\usepackage[normalem]{ulem}

\title{Fake It Until You Make It:\\Linear-Time Verification of Rings and Fields}

\date{}
\makeatletter
\let\old@fnsymbol\@fnsymbol
\renewcommand{\@fnsymbol}[1]{%
  \ifnum#1=1\relax
    \ensuremath{\dagger}%
  \else
    \old@fnsymbol{#1}%
  \fi
}
\makeatother
\author{Youlong Ding\thanks{The Hebrew University of Jerusalem, Israel. Email: \texttt{youlong.ding@mail.huji.ac.il}.}}

\begin{document}
\maketitle

\begin{abstract}
We consider the following problems: Given two $n \times n$ tables defining binary operations $+$ and $\cdot$ on a set $S$ of $n$ elements,  decide whether $(S,+,\cdot)$ forms a ring or, respectively, a field.
Recently, Dudek, Fischer, Gokaj, Jin, K\"unnemann, Mao, and Redzic (STOC 2026) obtained the following two (near-)optimal results:
\begin{itemize}
    \item A randomized $O(n^2\log(1/\delta))$-time algorithm for verifying rings.
    \item A deterministic $O(n^2)$-time algorithm for verifying fields.
\end{itemize}
Their algorithms build on machinery of Evra, Gadot, Klein, and Komargodski (FOCS 2024), which relies on Classification of Finite Simple Groups (CFSG).

In this work, we give a deterministic $O(n^2)$-time algorithm for ring verification, resolving the deterministic complexity of this problem. As a corollary, we also obtain a deterministic $O(n^2)$-time algorithm for field verification. 
Our algorithms are elementary and avoid CFSG machinery entirely.
\end{abstract}

\newpage


\section{Introduction}\label{sec:intro}

We consider fundamental verification questions for algebraic structures consisting of a ground set $S$ together with one or more operations $+$, $\cdot$, etc. on $S$. For example: does $(S, +)$ form a group? Does $(S,+,\cdot)$ form a ring? Does $(S,+,\cdot)$ form a field? In this paper, we consider the finite setting, where $|S|=n$ and each binary operation is represented explicitly by its $n\times n$ Cayley table.

Since even a single binary operation is represented by an $n\times n$ table, the input size is already $\Theta(n^2)$. It is obvious that every algorithm for the above problem must run in time $\Omega(n^2)$, i.e., linear in the input size.
The question is then about an optimal upper bound: 
\begin{center}
    Can we design linear-time algorithms for verifying algebraic structures?
\end{center}

Since verifying the existence of identity/inverse can both be trivially done in time $O(n^2)$, answering such questions essentially boils down to deciding whether certain algebraic identities hold for the given tables, such as the associative law
$$
(a\cdot b) \cdot c=a\cdot (b\cdot c)
$$
or the distributive law
$$
a\cdot (b+c)=a\cdot b+a \cdot c.
$$

Interestingly, even verifying the associativity of a given operation turns out to be non-trivial. A naive verification checks all triples, resulting in an $O(n^3)$ running time. An observation attributed to F.~W.~Light in 1949 (see \cite{CliffordPreston1961}) yields a deterministic $O(n^2\log n)$ test when a logarithmic-size generating set is available. Rajagopalan and Schulman~\cite{Siamcomp:RajagopalanS00} gave a surprising randomized algorithm that checks associativity in time
$O(n^2\log(1/\delta))$ with one-sided error $\delta$.

A recent breakthrough result of Evra, Gadot, Klein, and Komargodski~\cite{FOCS:EvraGKK24} gives a \emph{deterministic} $O(n^2)$-time algorithm for verifying groups. Their idea is to find a so-called \emph{basis} $B$ for $S$ of size $O(\sqrt{n})$ in $O(n^2)$ time, where a basis is a set $B$ such that $B+B=S$, i.e., every element of $S$ can be written as the sum of two elements of $B$. Then they show that associativity testing reduces to checking that all five ways of parenthesizing
$w+x+y+z$ agree for every $w,x,y,z\in B$, which takes $O(|B|^4)=O(n^2)$ time.

More recently, Dudek, Fischer, Gokaj, Jin, K\"{u}nnemann, Mao, and Redzic~\cite{STOC:DudekFGJKMR26} extended this basis approach to rings and fields by verifying distributive laws through partially restricted checks. Namely, to verify
$
a\cdot(b+c)=a\cdot b+a\cdot c,
$
they show that it suffices to restrict two of the variables to the basis $B$, resulting in $O(|B|^2n)=O(n^2)$ time. Together with the basis approach for group verification, this yields an $O(n^2)$-time algorithm for verifying fields.

The subtlety arises for rings. Since $(S,\cdot)$ is not necessarily a group, the basis-finding approach is inapplicable to the multiplication operation. To verify associativity of multiplication, \cite{STOC:DudekFGJKMR26} therefore resorts to the randomized Rajagopalan--Schulman test~\cite{Siamcomp:RajagopalanS00}, resulting in only a randomized $O(n^2\log(1/\delta))$-time algorithm for verifying rings.

A second drawback of \cite{STOC:DudekFGJKMR26} is that their algorithms for rings and fields inherit the complexity of the basis-finding algorithm in \cite{FOCS:EvraGKK24}, which is very complicated and relies on the heavy machinery of the Classification of Finite Simple Groups (CFSG)~\cite{aschbacher2004status}.

In this paper, we improve upon \cite{STOC:DudekFGJKMR26} by giving an optimal deterministic algorithm for verifying rings. Moreover, our algorithm is elementary and does not rely on CFSG.

\begin{theorem}[Main theorem; informal]\label{thm:intro-main}
Given a set $S$ of $n$ elements, together with binary operations $+$ and $\cdot$ on $S$, there exists a deterministic algorithm that decides whether $(S,+,\cdot)$ forms a ring in time $O(n^2)$.
\end{theorem}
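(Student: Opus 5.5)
The plan is to avoid CFSG entirely. We first certify, by an elementary argument, that $(S,+)$ is an abelian group, obtaining along the way an explicit coordinate system on $S$. From this we build a basis $B$ with $B+B=S$ and $|B|=O(\sqrt n)$, and use it to verify distributivity as in \cite{STOC:DudekFGJKMR26}. Once distributivity holds, associativity of multiplication becomes a trilinear identity, so it suffices to check it on a logarithmic-size generating set.

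\textbf{Step 1: the additive group (``fake it'').} In $O(n^2)$ time we check that $+$ is commutative, has an identity $0$, and that every row of the table is a permutation. For $g\in S$ let $\tau_g(x)=x+g$. We greedily build generators $g_1,g_2,\dots$ and maintain the set $H_j$ reachable from $0$ by the translations $\tau_{g_1},\dots,\tau_{g_j}$, computed by BFS. Each new $g_{j+1}$ is any element outside $H_j$.
\begin{itemize}[nosep]
\item If $(S,+)$ really is an abelian group, then $|H_{j+1}|\ge 2|H_j|$. Whenever this fails we reject, so at most $k\le\log_2 n$ generators arise.
\item The BFS tree gives, for each $y\neq 0$, a predecessor $y'$ and an index $i$ with $y=\tau_{g_i}(y')$.
\end{itemize}
We then check, in $O(nk^2)$ time, that $\tau_{g_i}\tau_{g_j}=\tau_{g_j}\tau_{g_i}$ for all $i,j$. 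We also check, in $O(n^2)$ time, that $x+y=\tau_{g_i}(x+y')$ for all $x$ and all $y\neq0$.

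\textbf{Step 2: why these checks certify an abelian group (``make it'').} By induction along the BFS tree, $x+y=\tau_{w(y)}(x)$, where $w(y)$ is the word of generator translations labelling the tree path to $y$. Hence every right translation lies in the group $T$ generated by $\tau_{g_1},\dots,\tau_{g_k}$. The group $T$ is abelian because the generator translations commute. It is transitive because $\tau_{w(y)}(0)=y$. A transitive abelian permutation group acts regularly, so an element of $T$ is determined by its value at $0$. Both $\tau_{w(y+z)}$ and $\tau_{w(z)}\tau_{w(y)}$ send $0$ to $y+z$, so they are equal. Evaluating at $x$ gives $x+(y+z)=(x+y)+z$.

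\textbf{Step 3: a basis.} The orders $m_i=|H_i|/|H_{i-1}|$ give a mixed-radix coordinate system on $S$. We identify $S$ with $\{0,\dots,n-1\}$ and split the digits, subdividing a single cyclic digit if necessary, to obtain $B$ with $B+B=S$ and $|B|=O(\sqrt n)$. Sums in $B+B$ are read directly from the table, and the identity $B+B=S$ is verified explicitly in $O(|B|^2)$ time.

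\textbf{Step 4: distributivity and multiplicative associativity.} Using $B$, we invoke the restricted distributivity check of \cite{STOC:DudekFGJKMR26}, which takes $O(|B|^2 n)=O(n^2)$ time and establishes both distributive laws. Now both $(a\cdot b)\cdot c$ and $a\cdot(b\cdot c)$ are additive in each of $a,b,c$. The elements $g_1,\dots,g_k$ generate $(S,+)$, so it suffices to check associativity for $a,b,c\in\{g_1,\dots,g_k\}$, which costs $O(\log^3 n)$ time. Checking a multiplicative identity, if required, takes $O(n^2)$ time.

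\textbf{Main obstacle.} The crux is Step 1–2: certifying associativity of $+$ deterministically in linear time without any prior structure. The difficulty is to use the generator bound $k=O(\log n)$, which is only justified under the group hypothesis, and still make a sound decision; rejecting whenever the doubling fails resolves this. A secondary point needing care is Step 3, namely constructing $B$ when a single cyclic factor is large; splitting that digit handles it.
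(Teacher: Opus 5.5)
Your proposal is correct, but it follows a genuinely different route from the paper in two of its three components. Only the last step coincides with the paper's Idea~1: once biadditivity is known, associativity of $\cdot$ is checked on the $O(\log^3 n)$ generator triples.

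\textbf{The additive group.} The paper runs a standard cyclic-decomposition algorithm under a time budget. It builds $\tau\colon \D\to[n]$ by an odometer sweep and certifies that $\tau$ is an isomorphism from $\Z_{d_1}\times\cdots\times\Z_{d_k}$, maintaining $a\oplus b$ incrementally to stay within $O(n^2)$. You instead certify associativity from three ingredients: a BFS tree over the generator translations, the $n^2$ checks $x+y=(x+y')+g_i$, and the fact that a transitive abelian permutation group acts regularly. This is sound, and the doubling rule is only needed to keep the $O(nk^2)$ commutation check cheap. It avoids any external decomposition algorithm or budgeted emulation, so it is arguably even more elementary than the paper's verifier.

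What it yields, however, is only a polycyclic coordinate system, in which $m_i g_i\in H_{i-1}$ need not vanish. For that reason the paper's distributivity argument is not directly available to you. The faked table $\Mtil(x,y)=\sum_{i,j}(x_iy_j)\cdot c_{ij}$ is well defined and biadditive only because the coordinates are independent cyclic ones satisfying the order conditions $d_i c_{ij}=d_j c_{ij}=0$.

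\textbf{Distributivity.} You route distributivity through an explicit basis and the restricted check of Dudek et al., used as a black box. Your digit-splitting construction does give $|B|=O(\sqrt n)$ (taking $s=\lceil\sqrt n/|H_j|\rceil$ bounds both halves by $2\sqrt n$). The explicit verification of $B+B=S$ keeps this step sound regardless. This is legitimate, since CFSG enters their work only through basis finding for general groups, which you bypass in the abelian case.

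The trade-off is that your argument depends on their distributivity theorem. The paper's ``fake and compare'' step is self-contained: it proves distributivity from scratch using only the odometer and the fact that a biadditive map is determined by its values on generator pairs.
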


As a direct corollary, we also obtain an $O(n^2)$-time algorithm for field verification without the use of CFSG,\footnote{Verifying fields reduces to verifying rings by additionally checking the existence of a multiplicative identity, multiplicative inverses for all nonzero elements, and commutativity of multiplication, all of which can be checked straightforwardly in $O(n^2)$ time.} yielding a substantially simpler alternative to the algorithm of \cite{STOC:DudekFGJKMR26}.

\begin{corollary}[Informal]\label{cor:intro-field}
Given a set $S$ of $n$ elements, together with binary operations $+$ and $\cdot$ on $S$, there exists a deterministic algorithm that decides whether $(S,+,\cdot)$ forms a field in time $O(n^2)$.
\end{corollary}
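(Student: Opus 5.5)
The plan is to reduce field verification to the ring verification of Theorem~\ref{thm:intro-main}, following the footnote. A field is exactly a commutative ring with identity $1 \neq 0$ in which every nonzero element has a multiplicative inverse. So the algorithm first runs the deterministic $O(n^2)$ ring verifier of Theorem~\ref{thm:intro-main} on $(S,+,\cdot)$ and rejects if it rejects. If it accepts, $(S,+)$ is an abelian group and $\cdot$ is associative and distributes over $+$. The remaining field axioms are conditions quantified over at most two variables, so each can be checked by a single scan of the tables.

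The remaining checks are performed in this order.
\begin{enumerate}[label=(\roman*)]
\item Find the additive identity $0$ by scanning the $+$ table for an element $z$ with $z+a=a$ for all $a$. Such an element exists and is unique, since the ring check succeeded. This costs $O(n^2)$, testing each candidate in $O(n)$.
\item Find a multiplicative identity by testing each candidate $e$ for $e\cdot a=a\cdot e=a$ for all $a\in S$, in $O(n)$ per candidate. Reject if none exists, and reject if $e=0$, so that the requirement $1\neq 0$ holds. This also excludes the degenerate case $n=1$.
\item Check commutativity of $\cdot$ by comparing entry $(a,b)$ with entry $(b,a)$ for all pairs, in $O(n^2)$.
\item For each $a\neq 0$, scan row $a$ of the $\cdot$ table for some $b$ with $a\cdot b=e$. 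With commutativity already verified, this gives a two-sided inverse. The total cost is $O(n^2)$.
\end{enumerate}
Correctness is immediate from the definition of a field. All steps are deterministic, and the total time is $O(n^2)$ plus the cost of Theorem~\ref{thm:intro-main}, which is also $O(n^2)$.

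There is essentially no obstacle here, since all the difficulty (associativity of $\cdot$ and distributivity) is absorbed by Theorem~\ref{thm:intro-main}. The only points needing care are bookkeeping: handling the $1\neq 0$ condition explicitly, and making sure the inverse check is two-sided. Ordering the commutativity check before the inverse check handles the latter.
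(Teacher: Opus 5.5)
Your proposal is correct and matches the paper's proof essentially step for step. The paper's \textsc{VerifyField} runs the ring verifier, finds a multiplicative identity $1\neq 0$, checks commutativity, and checks one-sided inverses (made two-sided by commutativity), all in $O(n^2)$ time. The only cosmetic difference is that the paper reuses the additive identity certified by \textsc{VerifyAbelian} rather than rescanning the addition table, which changes nothing.
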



\paragraph{Related Work.}
We briefly mention related work beyond that discussed in \cite{FOCS:EvraGKK24} and \cite{STOC:DudekFGJKMR26}.
A recent work \cite{Beals26} shows how to eliminate the use of CFSG from linear-time associativity verification for groups.
We note that this does not remove the dependence on CFSG from the ring-verification algorithm of \cite{STOC:DudekFGJKMR26}, as their test for distributivity still relies on CFSG-dependent machinery.



\section{Technical Overview}

\begin{center}
\textbf{TL;DR:} \emph{A biadditive multiplication is determined by its
values on additive generators, and its associativity can be checked on
triples of those generators.}
\end{center}

At a conceptual level, our algorithm rests on the above two classical facts about
biadditive multiplications. The main challenge is to turn these algebraic facts
into an \underline{efficient verification} procedure. We first explain the two ideas that
make this possible at a high level, and then give the technical details.

\paragraph{Idea 1 (for deterministic optimality): A Dependency Reversal.}
A natural verification order is to first verify the associativity of $(S, \cdot)$ and then verify distributivity.
This is the order used in \cite{STOC:DudekFGJKMR26} and is the key reason why they need to resort to a randomized associativity test.
Our method instead uses the opposite order, based on the following key observation: once distributivity is established, verifying $(ab)c = a(bc)$ (i.e., the associativity of $(S, \cdot)$) reduces to verifying it with $a,b,c$ all restricted to generators of $(S, +)$.
Since every group has a generating set of size at most $\log n$, associativity requires only
$O(\log^3 n)$ checks rather than the naive $O(n^3)$ checks.
Though simple in hindsight, this key observation already allows us to upgrade the $O(n^2\log(1/\delta))$-time randomized ring verifier in \cite{STOC:DudekFGJKMR26} to an optimal deterministic $O(n^2)$-time verifier.

\paragraph{Idea 2 (for removing the machinery of CFSG): Faking It Until You Make It.}
We then seek to remove the heavy machinery of CFSG used in \cite{STOC:DudekFGJKMR26} for verifying the distributivity of $(S,+,\cdot)$.\footnote{Indeed, their algorithm also uses CFSG to verify the associativity of $(S,+)$, but this can now be eliminated by \cite{Beals26}.}
The idea is to use a candidate cyclic decomposition of $(S,+)$ to generate (or ``fake'') a multiplication table on $S\times S$ \underline{in $O(n^2)$ time}\footnote{That is, each entry of the table should be generated in amortized $O(1)$ time.} that is expected to be distributive by construction. We then compare this faked table with the input multiplication table. The key property is that the input multiplication satisfies the distributive laws if and only if it agrees with the faked table.

In what follows, we provide more details. We begin by explaining how to verify that $(S,+)$ is an abelian group. This part serves two purposes. First, it provides a simpler alternative to \cite{Beals26}, specialized to abelian groups, for eliminating the use of CFSG. Second, the machinery developed here will later be reused to verify distributivity.

\paragraph{Verifying an Abelian Group by Checking an Isomorphism.}
A key structural property of abelian groups that we rely on is that every finite abelian group $(S,+)$ is isomorphic to a direct product of cyclic groups:
\[
  (S,+)\;\cong\;\Z_{d_1} \times \cdots\times \Z_{d_k},\qquad
  \textstyle\prod_i d_i=n,\quad k\le\log_2 n ,
\]
with independent generators $g_1,\dots,g_k$. In other words, there exists an isomorphism
\[
\sigma\colon S\to \Z_{d_1}\times\cdots\times\Z_{d_k},
\qquad
a\mapsto (a_1,\dots,a_k).
\]

We construct a mapping $\sigma$ by running a standard cyclic-decomposition algorithm for abelian groups \cite{Vikas1996,Kavitha2007} in time $\widetilde{O}(n)$.
The algorithm will output $g_1, \dots, g_k$, $d_1, \dots, d_k$, from which we can construct a mapping $\sigma$.
This requires some explanation, since \uline{the input is not guaranteed to be an abelian group}: what does it mean to run an algorithm intended for abelian groups on an arbitrary input? Following \cite{FOCS:EvraGKK24}, we emulate the decomposition algorithm under a prescribed time budget. If the input is indeed an abelian group, the algorithm is guaranteed to return an output within this budget. If the budget is exceeded, we simply reject.
We refer the reader to \Cref{sec:abelian} for more details. We then continue with the constructed map $\sigma$. The following two statements are then equivalent:
\begin{itemize}
\item $(S,+)$ is an abelian group.
\item The constructed mapping $\sigma$ is an isomorphism; namely, $\sigma$ is both a bijection and a group homomorphism.
\end{itemize}
Therefore, verifying that $(S,+)$ is an abelian group (which also verifies its associativity as a byproduct) reduces to two tasks. First, we check that $\sigma$ is a bijection, which can be done in $\widetilde{O}(n)$ time. Second, we check that $\sigma$ is a group homomorphism: for every $a,b\in S$, we verify
\[
\sigma(a+b)=\sigma(a)+\sigma(b).
\]
However, recall that $\sigma(a)$ and $\sigma(b)$ are $k$-dimensional vectors, where $k=O(\log n)$, so computing their sum takes $O(\log n)$ time. Checking the homomorphism condition for all pairs $a,b$ therefore takes $O(n^2\log n)$ time, which exceeds our target of $O(n^2)$.

\paragraph{Rescue: Checking the Homomorphism of $\sigma^{-1}$ via Incremental Computation.}
Since $\sigma$ is a (verified) bijection, checking that $\sigma$ is a group homomorphism is equivalent to checking that $\sigma^{-1}$ is a group homomorphism. Denote
$$
D:=\Z_{d_1}\times\cdots\times\Z_{d_k}.
$$
Then for every
$$
\mathbf{a}=(a_1,\dots,a_k),\qquad
\mathbf{b}=(b_1,\dots,b_k)\in D,
$$
we need to verify
$$
\sigma^{-1}(\mathbf{a}+\mathbf{b}) = \sigma^{-1}(\mathbf{a})+\sigma^{-1}(\mathbf{b}).
$$
Now the right-hand side can be computed in $O(1)$ time using one lookup in the input addition table, but the bottleneck becomes the left-hand side:  computing the sum $\mathbf{a}+\mathbf{b}$ takes $O(k)=O(\log n)$ time.

We avoid this overhead through incremental computation. For each fixed $\mathbf{a}\in D$, we enumerate all $\mathbf{b}\in D$ in counting order. During this enumeration, we maintain a running value
$$
\mathbf{z}\gets \mathbf{a}+\mathbf{b},
\qquad\text{i.e.,}\qquad
(z_1,\dots,z_k)\gets (a_1,\dots,a_k)+(b_1,\dots,b_k).
$$
Whenever the enumeration performs an increment of one coordinate of $\mathbf{b}$, say $b_i$ (either as the usual increment or as part of a carry), we update the corresponding coordinate $z_i$ in exactly the same way. Thus, instead of recomputing $\mathbf{a}+\mathbf{b}$ from scratch for each $\mathbf{b}$, we maintain it incrementally throughout the enumeration.
Therefore the enumeration of all $n$ tuples of $\mathbf{b}$ uses only $O(n)$ such increments in total. Thus, for each fixed $\mathbf{a}$, all values of $\mathbf{a}+\mathbf{b}$ can be maintained in $O(n)$ total time.
Consequently, the homomorphism condition for all $n^2$ pairs can be verified in $O(n^2)$ total time.

So far, we have shown how to verify that $(S,+)$ is an abelian group. It remains to verify the additional structure required for a ring: distributivity and the associativity of $(S,\cdot)$. As described in Idea 1, once distributivity has been established, associativity needs to be checked only on triples of generators of $(S,+)$, and hence requires only $O(\log^3 n)$ checks.

Thus, the remaining task is to verify distributivity, say
$$
a\cdot(b+c)=a\cdot b+a\cdot c.
$$
A naive verification enumerates all $O(n^3)$ triples $(a,b,c)$. \cite{STOC:DudekFGJKMR26} extended basis-finding approach in \cite{FOCS:EvraGKK24} to verify distributivity through partially restricted checks. However, finding such basis requires heavy machinery of CFSG.
In the remainder of this section, we present an elementary algorithm for verifying distributivity that avoids CFSG entirely.

\paragraph{Verifying Distributivity by Faking a Multiplication Table $\Mtil$.}
Our idea is that instead of checking whether the given multiplication table $M$ satisfies the $O(n^3)$ equations imposed by distributivity law, we assume distributivity and ``fake'' a multiplication table $\Mtil$ from scratch, and then check whether $\Mtil$ coincides with $M$. 

Let $g_1,\dots,g_k$ be generators of $(S,+)$.\footnote{This is well-defined since we have already verified that $(S,+)$ is an abelian group.} We first define
$$
\Mtil(g_i,g_j):=M(g_i,g_j)
$$
for every pair of generators. The key fact we rely on is that, once these values are fixed, distributivity determines all other entries of $\Mtil$. Indeed, writing $a$ and $b$ as linear combinations of the generators, $\Mtil(a,b)$ is determined as the corresponding linear combination of the values $\Mtil(g_i,g_j)$, with coefficients given by the coordinates of $a$ and $b$.

Some explanation is in order. \uline{The values $M(g_i,g_j)$ need not be viewed as any ``ground truth'' for $g_i\cdot g_j$, nor do we need them to be.} We simply use these entries of $M$ as the starting data from which a distributive multiplication table $\Mtil$ is reconstructed. If the resulting table $\Mtil$ agrees with $M$ everywhere, then $M$ itself must satisfy the distributive laws. Conversely, if $M$ already satisfies the distributive laws, then distributivity forces every entry of $M$ to be exactly the value reconstructed in $\Mtil$, and hence $M=\Mtil$. Thus, verifying distributivity reduces to checking whether the faked table $\Mtil$ coincides with the input table $M$.\footnote{The same philosophy already appeared in our verification of the abelian group structure. When constructing the candidate mapping $\sigma$, we do not need to regard the entries $A(a,b)$ as ground truth for $a+b$. What matters is that, if the resulting $\sigma$ is verified to be an isomorphism with respect to $A$, then $A$ itself necessarily defines an abelian group.}

More formally, we fake the table $\Mtil$ as follows:
\begin{enumerate}
\item Define $g_i\coloneqq \sigma^{-1}(e_i)$ for $i\in[k]$, where $e_i$ is the $i$-th basis vector.
\item For $a,b\in S$, writing
$$
\sigma(a)=(a_1,\dots,a_k),\qquad
\sigma(b)=(b_1,\dots,b_k),
$$
define
$$
\Mtil(a,b)\coloneqq
\sum_{i,j} a_i b_j(g_i\cdot g_j),
$$
where $a_i\in \{0,\dots,d_i-1\}$ and $b_j\in \{0,\dots,d_j-1\}$ are the coordinates of $a$ and $b$, respectively.
\end{enumerate}

\paragraph{Incremental Computation, Second Encounter.}
Again, computing each entry $\Mtil(a,b)$ directly from the above formula is too expensive: it involves forming the coefficients $a_i b_j$, taking the corresponding integer multiples of $g_i\cdot g_j$, and summing over all pairs $(i,j)$. Doing this independently for all $n^2$ entries would therefore exceed our $O(n^2)$ time budget.

We again overcome this bottleneck through incremental computation. The computation proceeds in two stages. First, for each generator $g_i$, we generate the entire row
$$
\phi_i(b)\coloneqq \Mtil(g_i,b).
$$
We maintain a running $b\in S$ in the counting order as before (implemented through the map $\sigma^{-1}$). Whenever the enumeration increments the $j$-th coordinate of $b$, we update
$$
\phi_i(b)\gets \phi_i(b)+(g_i\cdot g_j).
$$
Thus, each elementary increment requires only one addition in $(S,+)$. Since there are only $O(n)$ elementary increments in one complete enumeration, all $k=O(\log n)$ generator rows can be generated in $O(n\log n)$ total time.

Next, we generate all remaining rows of $\Mtil$. We maintain a running $a\in S$ in the counting order while maintaining the current row
$$
r(\cdot)\coloneqq \Mtil(a,\cdot).
$$
Whenever the enumeration increments the $i$-th coordinate of $a$, we update for all $b \in S$
$$
\Mtil(a+g_i,b) \gets \Mtil(a,b)+\Mtil(g_i,b)
=r(b)+\phi_i(b).
$$
Hence, the new row is obtained simply by adding the precomputed row $\phi_i$ entrywise to $r$. This takes $O(n)$ time per elementary increment of $a$. Since the entire enumeration of $a$ involves only $O(n)$ such increments, all $n$ rows of $\Mtil$ are generated in $O(n^2)$ total time.
In this way, although an individual entry of $\Mtil$ is not computed from its defining formula in constant time, the entire table is generated in $O(n^2)$ time, or amortized $O(1)$ time per entry.




\section{Preliminaries}\label{sec:prelim}

\paragraph{Model and conventions.}
We work in the word-RAM model with words of $O(\log n)$ bits. The input consists of
$n$ and two read-only tables $A,M:[n]\times[n]\to[n]$; reading one cell is a
\emph{probe}. We write $x+y:=A(x,y)$ and $xy:=x\cdot y:=M(x,y)$. The algorithm
rejects as soon as it reads a value outside $[n]$. For $m\in\N$ and an element $x$ of
a (certified) abelian group we write $m\cdot x$ for the $m$-fold sum, computable with
$O(\log m)$ additions by binary doubling. All logarithms are base $2$.

\begin{definition}[Abelian group]
$([n],A)$ is an \emph{abelian group} if $+$ is associative and commutative, there exists an element $0\in[n]$ such that $x+0=0+x=x$ for every $x\in[n]$, and for every $x\in[n]$ there exists an element $-x\in[n]$ such that
$
x+(-x)=(-x)+x=0.
$
\end{definition}

\begin{definition}[Ring]
$([n],A,M)$ is a \emph{ring} if $([n],+)$ is an abelian group, $\cdot$ is
associative, and $x(y+z)=xy+xz$ and $(y+z)x=yx+zx$ for all $x,y,z$.
\end{definition}

\begin{definition}[Field]
$([n],A,M)$ is a \emph{field} if it is a ring, $\cdot$ is commutative, and there exists an element $1\neq 0$ such that $1x=x1=x$ for every $x\in[n]$, where $0$ denotes the additive identity. Moreover, for every $x\neq 0$, there exists $x^{-1}\in[n]$ such that
$
xx^{-1}=x^{-1}x=1.
$
\end{definition}

\begin{lemma}[Cyclic decomposition of finite abelian groups]
\label{lem:abelian-decomposition}
Let $(S,+)$ be an abelian group of order $n$, given by its Cayley table $(n, A)$. There is a deterministic algorithm that, in time $\widetilde{O}(n)$, computes elements $g_1,\ldots,g_k\in S$ and integers $d_1,\ldots,d_k\ge 2$ such that
$$
(S,+)\cong \mathbb{Z}_{d_1}\times\cdots\times\mathbb{Z}_{d_k},
\qquad
\prod_{i=1}^k d_i=n,
$$
where $g_i$ generates the $i$-th cyclic factor. In particular, every $x\in S$ admits a unique representation
$
x=\sum_{i=1}^k x_i g_i,
\qquad
x_i\in{0,\ldots,d_i-1}.
$
Moreover, $k\le \log_2 n$.
\end{lemma}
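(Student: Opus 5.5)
Since the decomposition theorem itself ($S\cong\Z_{d_1}\times\cdots\times\Z_{d_k}$ with $\prod d_i=n$) is classical, the content is algorithmic. The plan is to build the decomposition greedily. We maintain a subgroup $H=\langle g_1,\dots,g_j\rangle$ together with a bijection between $H$ and $\Z_{d_1}\times\cdots\times\Z_{d_j}$. At each stage we work in the quotient $S/H$, pick a suitable element, and adjust it so that it splits off as a direct factor.

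\textbf{Step 1: compute element orders.} For each $x\in S$ we compute its order by walking $x,2x,3x,\dots$. Done naively this costs $\sum_x \mathrm{ord}(x)$, which can be $\Theta(n^2)$. Instead we work $p$-primary component by component. We factor $n$, project each $x$ onto its Sylow part as $(n/p^e)\cdot x$ using $O(\log n)$ additions, and compute $p$-power orders by repeated multiplication by $p$. The total cost is $O(n\log n)$ additions. Since $S$ is the direct sum of its Sylow subgroups, it suffices to decompose each $p$-group $P$ and then concatenate the generators.

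\textbf{Step 2: decompose a $p$-group.} Let $|P|=p^m$. We pick $g_1$ of maximal order $p^{e_1}$; then $\langle g_1\rangle$ is a direct summand. Inductively, given $H=\langle g_1,\dots,g_j\rangle$ with $H\cong\bigoplus\Z_{p^{e_i}}$, we pick $y\notin H$ whose image in $P/H$ has maximal order $p^{f}$. Then $p^f y\in H$; write $p^f y=\sum c_i g_i$. The standard argument gives $p^f\mid c_i$ for all $i$ (using the maximality of the $e_i$ and $f$). We set $g_{j+1}=y-\sum (c_i/p^f)g_i$, which has order exactly $p^f$ and satisfies $\langle g_{j+1}\rangle\cap H=0$.

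To make this fast, we store $H$ as a marked set with coordinate labels, so membership and the coordinates $c_i$ are available by lookup. After adding $g_{j+1}$, we enumerate the new $H$ in counting order with incremental additions, as in the Technical Overview, at cost $O(|H|)$ per stage. Quotient orders are computed by iterating $y\mapsto py$ until it lands in $H$; this is the step I expect to be the main obstacle for staying within $\widetilde O(n)$. My first approach is to compute the quotient order of every $y$ simultaneously via dynamic programming along the map $y\mapsto py$: the quotient order of $y$ is $1$ if $y\in H$, and otherwise $p$ times the quotient order of $py$. Since $p$ times an element of a $p$-group has strictly smaller order, this recursion is well-founded, and each stage costs $O(n)$. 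There are at most $\log n$ stages, so the total is $O(n\log n)$, which is $\widetilde O(n)$. (Alternatively, one could cite \cite{Vikas1996,Kavitha2007} directly, as the overview does.)

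\textbf{Step 3: conclude.} The process stops when $H=P$. By construction the map $(x_i)\mapsto\sum x_ig_i$ is an injective homomorphism from $\bigoplus\Z_{d_i}$ onto $H$, so it is an isomorphism and $\prod d_i=n$. Uniqueness of the representation follows from bijectivity. Finally, each $d_i\ge2$, so $n=\prod d_i\ge 2^k$, which gives $k\le\log_2 n$.
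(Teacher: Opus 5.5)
Your proposal is correct, and it takes a different route from the paper. The paper gives no argument for this lemma. It simply cites the standard decomposition algorithms of \cite{Vikas1996,Kavitha2007}, which are more involved and attain (near-)linear time. You instead give a self-contained greedy algorithm:
\begin{itemize}
\item split the group into its Sylow parts;
\item repeatedly take an element of maximal order in $P/H$ and correct it by multiples of the existing generators so that it splits off;
\item obtain all quotient orders in $O(|P|)$ per stage from the recursion $\mathrm{ord}_{P/H}(y)=p\cdot\mathrm{ord}_{P/H}(py)$ for $y\notin H$.
\end{itemize}
This costs at most $O(n\log^2 n)$, which is within $\widetilde O(n)$ and far below the $O(n^2)$ that the main theorem actually needs. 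What your route buys is an elementary, constructive proof of the structure theorem itself. That fits the paper's emphasis on avoiding heavy machinery, and it makes the time budget used in Algorithm~\ref{alg:verifyabelian} explicit. The citation route is shorter and gives a sharper bound, but it treats the lemma as a black box.

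Two points to tighten. First, the divisibility $p^f\mid c_i$ is not a one-line consequence of maximality once $H$ has several generators; it needs an induction on $i$.
\begin{itemize}
\item Assume $p^f\mid c_l$ for all $l<i$, and put $y'=y-\sum_{l<i}(c_l/p^f)g_l$.
\item We have $e_i\ge f$, because the order of $y$ modulo $H_j$ is at most its order modulo $H_{i-1}$, which is at most $p^{e_i}$.
\item We have $p^{e_i}y'\in H_{i-1}$, since $p^{e_i}$ is the largest order in $P/H_{i-1}$.
\item On the other hand, $p^{e_i}y'=\sum_{l\ge i}p^{e_i-f}c_l g_l\in\langle g_i,\dots,g_j\rangle$.
\item Directness of $H_j$ then forces $p^{e_i}y'=0$, so $p^{e_i}\mid p^{e_i-f}c_i$, i.e., $p^f\mid c_i$.
\end{itemize}

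Second, two minor corrections. The element $(n/p^e)\cdot x$ is not the Sylow projection of $x$, but it lies in $P$ and these values exhaust $P$, which is all you use. Also, Step~1 costs $O(n\log n)$ per prime rather than in total, which is harmless for an $\widetilde O(n)$ bound.
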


\begin{proof}
This follows from the standard algorithms for computing cyclic decompositions of finite abelian groups; see \cite{Vikas1996,Kavitha2007}.
\end{proof}

\paragraph{Tuple groups.}
For $d_1,\dots,d_k\ge 2$ with $\prod_i d_i=n$ let
$\D=\D(d_1,\dots,d_k):=\Z_{d_1}\times\cdots\times\Z_{d_k}$ with componentwise
addition $\oplus$; its identity is $\mathbf 0$ and its unit vectors are
$e_1,\dots,e_k$. We identify tuples with their mixed-radix ranks in $[0,n)$ via
$t\mapsto t_1+t_2 d_1+t_3 d_1d_2+\cdots$; a tuple is stored as its rank plus its
digit array, so equality tests are $O(1)$. Note $k\le\log n$ since every $d_i\ge2$.
When $n=1$ we set $k=0$ and $\D$ is the trivial group.

The engine behind all our streaming passes is the elementary observation that
counting visits every tuple while changing few digits.

\begin{lemma}[Odometer]\label{lem:odometer}
Enumerate $\D$ in mixed-radix counting order $t^{(0)}=\mathbf 0,t^{(1)},\dots,
t^{(n-1)}$. Each transition $t^{(m)}\to t^{(m+1)}$ decomposes into a sequence of
\emph{elementary increments}, each of the form ``add $e_i$'' (that is, digit
$i\mathrel{+}=1$ modulo $d_i$): first $e_1$ is added; whenever digits $1,\dots,i$
have all just wrapped to $0$, $e_{i+1}$ is added as well. The enumeration visits
every tuple exactly once, the total number of elementary increments is at most $2n$,
and the enumeration is driven in $O(1)$ amortized time per elementary increment.
\end{lemma}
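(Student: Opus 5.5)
We prove the three claims of \Cref{lem:odometer} in turn: correctness of the decomposition, the bound on elementary increments, and the amortized cost.

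\textbf{Decomposition and correctness.} Mixed-radix addition of $1$ to the rank $t_1+t_2d_1+t_3d_1d_2+\cdots$ acts on digits as follows. Let $j$ be the least index with $t_j<d_j-1$. Then digits $1,\dots,j-1$ wrap from $d_i-1$ to $0$, digit $j$ increases by one, and all higher digits are unchanged. This is exactly the sequence ``add $e_1$, then add $e_2$, \dots, then add $e_j$''. Each digit $i<j$ wraps to $0$ after adding $e_i$, which triggers adding $e_{i+1}$; digit $j$ does not wrap, so the chain stops. Hence $t^{(m+1)}$ is the tuple of rank $m+1$. Since the ranking $t\mapsto t_1+t_2d_1+\cdots$ is a bijection $\D\to[0,n)$ (uniqueness of mixed-radix representation), the sequence $t^{(0)},\dots,t^{(n-1)}$ visits every tuple exactly once. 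At the final step $n-1\to n$ we may either stop, or perform a full wrap back to $\mathbf 0$ costing $k$ increments. Either way it is harmless, since $k\le\log n\le n$.

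\textbf{Counting increments.} Over all $n$ transitions, $e_i$ is added once for every $m$ at which digits $1,\dots,i-1$ all simultaneously wrap. Digit $1$ is incremented at every transition. Digit $i\ge2$ is incremented only when the lower block of rank $d_1\cdots d_{i-1}$ completes a full cycle, which happens at most $n/(d_1\cdots d_{i-1})$ times. Since each $d_j\ge2$, the total is at most
\[
\sum_{i\ge1} \frac{n}{d_1\cdots d_{i-1}}\le \sum_{i\ge1} n\,2^{-(i-1)}<2n .
\]
Including a possible final wrap adds at most $k$ more increments. To absorb it cleanly we count only transitions $m<n-1$ and treat the wrap separately.

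\textbf{Amortized cost.} Each elementary increment does a constant amount of work:
\begin{itemize}
\item increment digit $i$;
\item compare it with $d_i$ and set it to $0$ on a wrap;
\item update the stored rank by adding the precomputed weight $w_i=d_1\cdots d_{i-1}$ (on a wrap, subtract $(d_i-1)w_i$ instead);
\item decide whether to continue to $i+1$.
\end{itemize}
The weights $w_i$ are precomputed in $O(k)$ time. The only subtlety, and the closest thing to an obstacle, is keeping the rank consistent in $O(1)$ per increment rather than recomputing it. The weight trick handles this, and it stays within $O(\log n)$-bit words because ranks are less than $n$. Hence the driver costs $O(1)$ per elementary increment, which is $O(n)$ in total.
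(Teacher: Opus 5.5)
Your proof is correct and follows the paper's argument. The carry rule realizes the mixed-radix successor as the chain of increments $e_1,\dots,e_j$, and digit $i$ is incremented at most $n/(d_1\cdots d_{i-1})\le n/2^{i-1}$ times, which gives the geometric bound $2n$. Each increment costs $O(1)$; your rank maintenance via the weights $w_i$ and your remark on the optional final wrap are details the paper leaves implicit.
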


\begin{proof}
Counting from $0$ to $n-1$ in mixed radix visits each rank once, and the standard
carry rule is exactly the stated sequence of digit increments: adding $1$ to digit
$i$ maps $d_i-1$ to $0$, which is precisely addition of $e_i$ in $\Z_{d_i}$, so no
information is lost by the wrap and the next carry propagates as an independent
addition of $e_{i+1}$. Digit $i$ is incremented once per $\prod_{j<i}d_j$ ranks,
hence at most $n/2^{\,i-1}$ times in total; summing the geometric series bounds the
number of elementary increments by $2n$. Maintaining the digit array and detecting
wraps is $O(1)$ per increment.
\end{proof}

\section{Verifying Abelian Groups in Linear Time}\label{sec:abelian}

In this section, we give a deterministic $O(n^2)$-time algorithm (Algorithm~\ref{alg:verifyabelian}) for verifying whether the input table $A$ defines an abelian group.
Our algorithm is considerably simpler than the approaches implied by \cite{FOCS:EvraGKK24,Beals26}.

We use the standard cyclic-decomposition algorithm for abelian groups from Lemma~\ref{lem:abelian-decomposition}. Since the input is not promised to be an abelian group, we use this algorithm only to construct a \emph{candidate} coordinate system. More precisely, following \cite{FOCS:EvraGKK24}, we run the decomposition algorithm under a prescribed time budget that is guaranteed to suffice whenever the input is indeed an abelian group. If the algorithm fails to terminate within this budget, we simply reject. Any output produced within the budget is treated as untrusted and is subsequently certified by a direct isomorphism check.

\begin{algorithm}[t]
\caption{\textsc{VerifyAbelian}$(A)$}
\label{alg:verifyabelian}
\begin{algorithmic}[1]
\State Run the cyclic-decomposition algorithm of Lemma~\ref{lem:abelian-decomposition} under a $\widetilde O(n)$ time budget
\State \textbf{reject} on timeout or malformed output; otherwise obtain $(d_1,\dots,d_k)$ and candidate generators $\widehat g_1,\dots,\widehat g_k$
\State \textbf{reject} unless $d_i\ge 2$ for all $i$ and $\prod_i d_i=n$
\State Find the unique $\widehat 0$ satisfying $A(\widehat 0,\widehat 0)=\widehat 0$; \textbf{reject} if it does not exist uniquely
\State $x\gets\widehat 0$ and $\tau(\mathbf 0)\gets\widehat 0$
\For{each subsequent tuple $t\in\D$ in odometer order}
\State for every elementary increment $e_i$ in the transition to $t$, update $x\gets A(x,\widehat g_i)$
\State $\tau(t)\gets x$
\EndFor
\State \textbf{reject} unless $\tau:\D\to[n]$ is a bijection; set $\sigma\gets\tau^{-1}$
\For{$a\in\D$}
\State $z\gets a$
\For{$b\in\D$ in odometer order}
\State \textbf{reject} unless $A\bigl(\tau(a),\tau(b)\bigr)=\tau(z)$
\State when advancing $b$, apply every elementary increment $e_j$ also to $z$
\EndFor
\EndFor
\State \textbf{accept} and output $(d_i)_i,\sigma,\tau,\ g_i:=\tau(e_i),\ 0:=\tau(\mathbf 0)$
\end{algorithmic}
\end{algorithm}

\begin{theorem}[Abelian group verification]\label{thm:abelian}
\Cref{alg:verifyabelian} decides whether an $n\times n$ table $A$ is the Cayley table of an abelian group in time $O(n^2)$. On acceptance, it outputs integers $d_1,\dots,d_k$, an isomorphism
$$
\sigma:([n],A)\longrightarrow
\D(d_1,\dots,d_k)
=\Z_{d_1}\times\cdots\times\Z_{d_k},
$$
its inverse $\tau=\sigma^{-1}$, generators
$$
g_i:=\tau(e_i),
$$
and the additive identity
$$
0:=\tau(\mathbf 0).
$$
In particular, $k\le \log n$ and every $x\in[n]$ has the unique representation
$$
x=\sum_{i=1}^k \sigma(x)_i\cdot g_i.
$$
\end{theorem}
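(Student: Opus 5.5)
We prove three things in turn: completeness (every abelian group is accepted), soundness (if the algorithm accepts, $A$ is an abelian group and $\sigma$ is an isomorphism), and the $O(n^2)$ running time.

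\emph{Completeness.} Suppose $([n],A)$ is an abelian group. By Lemma~\ref{lem:abelian-decomposition}, the decomposition algorithm finishes within its budget. It returns $d_i\ge 2$ with $\prod_i d_i=n$ and generators $\widehat g_i$ such that $t\mapsto\sum_i t_i\widehat g_i$ is an isomorphism $\D\to[n]$.

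\begin{itemize}
\item In a group, $x+x=x$ forces $x=0$, so $\widehat 0$ exists and is unique.
\item By Lemma~\ref{lem:odometer}, the running value $x$ starts at $\widehat 0$ and receives one addition of $\widehat g_i$ for each elementary increment $e_i$. Since the increments sum to $t$ in $\D$, by induction over the enumeration $\tau(t)=\sum_i t_i\widehat g_i$. Here the wrap of digit $i$ is harmless because $d_i\widehat g_i=0$. Hence $\tau$ is the isomorphism, in particular a bijection.
\item In the double loop, the same induction shows $z=a\oplus b$ at every step. The check $A(\tau(a),\tau(b))=\tau(a\oplus b)$ is exactly the homomorphism property, so it passes.
\end{itemize}

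\emph{Soundness.} Suppose the algorithm accepts. The same induction on the $b$-enumeration shows the invariant $z=a\oplus b$; it uses only the digit arithmetic in $\D$, not any property of $A$. Hence every check asserts
\[
A\bigl(\tau(a),\tau(b)\bigr)=\tau(a\oplus b)\qquad\text{for all } a,b\in\D .
\]
We also verified that $\tau$ is a bijection. Therefore $\tau$ is a bijection with $\tau(a)+\tau(b)=\tau(a\oplus b)$, i.e.\ $A$ is the structure obtained by transporting $\oplus$ along $\tau$. Associativity, commutativity, identity and inverses all transfer from $\D$, so $([n],A)$ is an abelian group and $\sigma=\tau^{-1}$ is an isomorphism.

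For this reason it does not matter whether the $\widehat g_i$ produced by the untrusted run are meaningful. It also does not matter that $\tau$ was built by iterated additions in a possibly non-associative table: the final check certifies $\tau$ directly. Consequently $0:=\tau(\mathbf 0)$ is the identity, $g_i=\tau(e_i)$, and the representation $x=\sum_i\sigma(x)_i g_i$ follows by applying $\tau$ to $\sigma(x)=\sum_i\sigma(x)_ie_i$. Uniqueness of the representation follows from bijectivity. The bound $k\le\log n$ holds because every $d_i\ge2$.

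\emph{Running time.}
\begin{itemize}
\item The decomposition is run under an $\widetilde O(n)$ budget, which is $o(n^2)$.
\item Finding $\widehat 0$ scans the diagonal in $O(n)$.
\item Building $\tau$ costs $O(n)$ by Lemma~\ref{lem:odometer}, since there are at most $2n$ increments, each with one table probe.
\item The bijection test is $O(n)$ with a marking array, and inverting $\tau$ to get $\sigma$ is also $O(n)$.
\item In the double loop, for each fixed $a$ the inner odometer performs $O(n)$ increments on both $b$ and $z$. Each comparison is $O(1)$, since tuples are stored with their ranks and updating a rank under $e_i$ (including a wrap) is an $O(1)$ arithmetic adjustment. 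This gives $O(n)$ per $a$ and $O(n^2)$ overall.
\end{itemize}

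The main subtle step is the budgeted emulation. We must argue that the decomposition algorithm, run on an arbitrary table, either exceeds its budget or produces output of the right syntactic form. We reject in the first case and on malformed output, and any well-formed output is certified afterwards. This requires only that the algorithm's running time bound on genuine abelian groups is explicit, so that a concrete budget can be fixed in advance.
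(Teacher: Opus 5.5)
Your proposal is correct and follows the paper's proof essentially step for step. Completeness comes from the decomposition lemma plus the odometer induction $\tau(t)=\sum_i t_i\widehat g_i$. Soundness comes from observing that the final sweep certifies $\tau$ as a bijective homomorphism from $(\D,\oplus)$, so every axiom transfers. The $O(n^2)$ bound comes from $O(n)$ odometer increments per fixed $a$.

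Your additional details fill in points the paper leaves implicit: wraps are harmless because $d_i\widehat g_i=0$, ranks update in $O(1)$ time, and timeouts and malformed output are both rejected in the budgeted emulation.
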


\begin{proof}
We first prove completeness. Suppose that $A$ is the Cayley table of an abelian group. By Lemma~\ref{lem:abelian-decomposition}, the decomposition algorithm terminates within its budget and returns generators $\widehat g_1,\dots,\widehat g_k$ and integers $d_1,\dots,d_k$ realizing a cyclic decomposition
$$
([n],A)\cong \D=\Z_{d_1}\times\cdots\times\Z_{d_k}.
$$
The unique idempotent $\widehat 0$ is the additive identity. During the odometer sweep, an immediate induction shows that for every $t=(t_1,\dots,t_k)\in\D$,
$$
\tau(t)=\sum_{i=1}^k t_i\cdot\widehat g_i.
$$
Thus $\tau$ is precisely the isomorphism induced by the cyclic decomposition. In particular, it is bijective and satisfies
$$
A\bigl(\tau(a),\tau(b)\bigr)=\tau(a\oplus b)
\qquad\text{for all }a,b\in\D,
$$
so the algorithm accepts.

For soundness, suppose that the algorithm accepts. \uline{We need not trust any property of the output of the decomposition algorithm.} The final comparison directly certifies that $\tau:\D\to[n]$ is a bijection satisfying
$$
A\bigl(\tau(a),\tau(b)\bigr)=\tau(a\oplus b)
\qquad\text{for all }a,b\in\D.
$$
Hence $\tau$ is a group isomorphism from the abelian group $(\D,\oplus)$ onto $([n],A)$. Therefore $([n],A)$ is an abelian group and $\sigma=\tau^{-1}$ is the claimed isomorphism.

It remains to analyze the running time. The candidate decomposition takes $\widetilde O(n)$ time. By Lemma~\ref{lem:odometer}, constructing $\tau$ takes $O(n)$ time, and its bijectivity can also be checked in $O(n)$ time. The final comparison performs exactly one probe of $A$ for each pair $(a,b)\in\D^2$, hence $n^2$ probes in total. For each fixed $a$, the values $z=a\oplus b$ are maintained throughout the odometer sweep using only $O(n)$ elementary increments. Thus the comparison takes $O(n^2)$ time overall.

Finally, after acceptance, $g_i=\tau(e_i)$ and $0=\tau(\mathbf 0)$ are certified generators and the additive identity. Since $\tau=\sigma^{-1}$, every $x\in[n]$ satisfies
$$
x=\tau(\sigma(x))
=\sum_{i=1}^k \sigma(x)_i\cdot g_i,
$$
as claimed.
\end{proof}

\section{Verifying Rings in Linear Time}
\label{sec:ring}
\label{sec:bilinear}
\label{sec:assoc}

In this section, we give a deterministic $O(n^2)$-time algorithm for verifying
whether the input tables $(A,M)$ define a ring. The algorithm has three steps:
we first certify that $([n],A)$ is an abelian group, then verify the two
distributive laws, and finally verify associativity of multiplication on a
small set of additive generators.

Suppose first that \textsc{VerifyAbelian} has accepted $A$. By
Theorem~\ref{thm:abelian}, we then have a certified isomorphism
\[
G:=([n],+)\cong
\D:=\Z_{d_1}\times\cdots\times\Z_{d_k},
\]
with generators $g_i=\sigma^{-1}(e_i)$, additive identity $0$, and
$k\le \log n$. For $x\in G$, write
\[
\sigma(x)=(x_1,\dots,x_k),
\qquad
x=\sum_{i=1}^k x_i\cdot g_i.
\]

\begin{algorithm}[t]
\caption{\textsc{VerifyDistributive}$(M;\,\text{certified }G)$}
\label{alg:bilinear}
\begin{algorithmic}[1]
\State $c_{ij}\gets M(g_i,g_j)$ for all $i,j\in[k]$
\State \textbf{reject} unless
$d_i\cdot c_{ij}=0$ and $d_j\cdot c_{ij}=0$
for every $i,j\in[k]$
\For{$i=1,\dots,k$}
    \Comment{construct $\varphi_i(y)=\Mtil(g_i,y)$}
    \State $y\gets 0$, $v\gets 0$, and $\varphi_i[0]\gets 0$
    \For{each subsequent tuple in $\D$ in odometer order}
        \State for every elementary increment $e_j$ in the transition,
        update $y\gets A(y,g_j)$ and $v\gets A(v,c_{ij})$
        \State $\varphi_i[y]\gets v$
    \EndFor
\EndFor
\State $x\gets 0$ and $r[y]\gets 0$ for all $y\in[n]$
\State \textbf{reject} unless $M(x,y)=r[y]$ for every $y\in[n]$
\For{each subsequent tuple in $\D$ in odometer order}
    \State for every elementary increment $e_i$ in the transition,
    update $x\gets A(x,g_i)$ and
    $r[y]\gets A(r[y],\varphi_i[y])$ for every $y\in[n]$
    \State \textbf{reject} unless $M(x,y)=r[y]$ for every $y\in[n]$
\EndFor
\State \textbf{accept}
\end{algorithmic}
\end{algorithm}

\begin{theorem}[Distributivity verification]\label{thm:bilinear}
Suppose the additive table $A$ has been certified by
\textup{\textsc{VerifyAbelian}}.
Then Algorithm~\ref{alg:bilinear} accepts if and only if $M$ satisfies both
distributive laws over $([n],A)$. It runs in deterministic time $O(n^2)$.
\end{theorem}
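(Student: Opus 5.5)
The plan is to introduce the ``faked'' table $\Mtil(x,y):=\sum_{i,j}x_iy_j\cdot c_{ij}$, where $\sigma(x)=(x_1,\dots,x_k)$ and $\sigma(y)=(y_1,\dots,y_k)$ are the certified coordinates from Theorem~\ref{thm:abelian}. We then prove three things in turn: (i) whenever the check in line~2 passes, $\Mtil$ is well defined and biadditive; (ii) the arrays computed by the algorithm are exactly $\varphi_i=\Mtil(g_i,\cdot)$ and, at each odometer tuple $x$, $r=\Mtil(x,\cdot)$; (iii) $M$ is distributive if and only if line~2 passes and $M=\Mtil$. Since the algorithm accepts exactly when line~2 passes and every row of $M$ matches $r$, the claimed equivalence follows.

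\emph{Step (ii)} is an induction along the odometer of Lemma~\ref{lem:odometer}. In the inner loop for fixed $i$, each elementary increment $e_j$ replaces $y$ by $y+g_j$, so $y=\tau(t)$ at the current tuple $t$. The same increment adds $c_{ij}$ to $v$, so after reaching tuple $t$ we have $v=\sum_j m_j c_{ij}$, where $m_j$ counts the increments of digit $j$. Because digit $j$ wraps modulo $d_j$ and $d_j c_{ij}=0$, this equals $\sum_j t_j c_{ij}=\Mtil(g_i,y)$. The outer loop works the same way: each increment $e_i$ adds $\varphi_i[y]$ to $r[y]$. Line~2 guarantees $d_i\varphi_i[y]=\sum_j t_j d_i c_{ij}=0$, so wraps are again harmless and $r[y]=\Mtil(x,y)$. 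Note that commutativity and associativity of $+$ are used freely here, since they were certified earlier.

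\emph{Step (iii).} For the ``if'' direction: once the torsion conditions hold, $\Mtil(x,y)$ depends on the coordinates $x_i,y_j$ only modulo $d_i,d_j$. Since $\sigma$ is an additive isomorphism, $\sigma(x+x')\equiv\sigma(x)+\sigma(x')$ coordinatewise modulo $d_i$. Hence $\Mtil(x+x',y)=\Mtil(x,y)+\Mtil(x',y)$, and symmetrically in the second argument. If $M=\Mtil$, then $M$ is therefore biadditive, which is exactly the two distributive laws.

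For the ``only if'' direction, assume $M$ satisfies both distributive laws. Then $M$ is biadditive, so $M(0,y)=0$ and $M(m\cdot x,y)=m\cdot M(x,y)$. In particular $d_i c_{ij}=M(d_i g_i,g_j)=M(0,g_j)=0$, and the same argument gives $d_j c_{ij}=0$, so line~2 passes. Expanding $x=\sum x_ig_i$ and $y=\sum y_jg_j$ by biadditivity then gives $M(x,y)=\Mtil(x,y)$, so every comparison succeeds. The initial row check at $x=0$ also succeeds, since $M(0,y)=0=r[y]$ by distributivity.

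\emph{Running time.} Line~2 takes $O(k^2\log n)=O(\log^3 n)$. The $\varphi_i$ loops cost $O(n)$ each by the odometer bound, for $O(n\log n)$ total. The outer loop makes at most $2n$ elementary increments, each costing $O(n)$, and performs $n$ row comparisons of cost $O(n)$. Altogether this is $O(n^2)$.

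The main obstacle, I expect, is the careful handling of the modular wrap-around in step~(ii) and the well-definedness in step~(iii). Integer coefficients accumulate along the odometer, while the coordinates are residues, and this is exactly where the torsion check of line~2 is indispensable. I would address it by proving once the lemma ``$\sum_j m_j c_{ij}$ depends only on $m_j \bmod d_j$ given $d_jc_{ij}=0$'', and then applying it both to the $\varphi_i$ recursion and to the biadditivity of $\Mtil$.
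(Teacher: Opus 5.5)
Your proposal is correct and takes essentially the same route as the paper. The paper's argument has the same skeleton: define $\Mtil$, use the torsion check of line~2 for well-definedness and biadditivity, establish the odometer invariants $\varphi_i=\Mtil(g_i,\cdot)$ and $r=\Mtil(x,\cdot)$, and conclude that the algorithm accepts iff $M=\Mtil$ iff $M$ is distributive. The only difference is cosmetic: you justify the invariants by counting digit increments modulo $d_j$, whereas the paper invokes additivity of $\Mtil$ directly. Both rest on the same torsion conditions $d_i c_{ij}=d_j c_{ij}=0$.
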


\begin{proof}
We first define
\[
\Mtil(x,y)
:=
\sum_{i=1}^k\sum_{j=1}^k
(x_i y_j)\cdot c_{ij}.
\tag{5.1}\label{eq:rebuilt-product}
\]
Since
\[
d_i\cdot c_{ij}=0
\qquad\text{and}\qquad
d_j\cdot c_{ij}=0
\tag{5.2}\label{eq:order-conditions}
\]
for every $i,j\in[k]$, $\Mtil$ is well defined.

Suppose first that $M$ satisfies both distributive laws. Since
$d_i\cdot g_i=d_j\cdot g_j=0$,
\[
d_i\cdot c_{ij}=(d_i\cdot g_i)g_j=0,
\qquad
d_j\cdot c_{ij}=g_i(d_j\cdot g_j)=0,
\]
so the order conditions pass. Moreover, distributivity gives
\[
M(x,y)
=
M\left(\sum_i x_i\cdot g_i,\sum_j y_j\cdot g_j\right)
=
\sum_{i,j}(x_i y_j)\cdot M(g_i,g_j)
=
\Mtil(x,y).
\tag{5.3}\label{eq:M-equals-Mtilde}
\]

The algorithm computes $\Mtil$ incrementally. During the construction of
$\varphi_i$, the invariant is
\[
v=\Mtil(g_i,y):
\]
an elementary increment $e_j$ changes $y$ to $y+g_j$ and $v$ to
$v+c_{ij}=\Mtil(g_i,y+g_j)$. Hence
$\varphi_i[y]=\Mtil(g_i,y)$ for every $y$. During the outer sweep, the invariant
is
\[
r[y]=\Mtil(x,y)\qquad\text{for every }y.
\]
Indeed, an increment $e_i$ changes the row by
\[
\Mtil(x+g_i,y)=\Mtil(x,y)+\varphi_i[y].
\]
Since the odometer visits every $x$ and $M=\Mtil$, all comparisons succeed.

Conversely, suppose the algorithm accepts. The order conditions make
\eqref{eq:rebuilt-product} well defined and biadditive: changing $x_i$ by
$d_i$, or $y_j$ by $d_j$, changes the corresponding terms by zero.
The same two invariants therefore show that the algorithm constructs exactly
the table $\Mtil$. Since every row is compared with $M$ and all comparisons
succeed, $M=\Mtil$. Hence $M$ is biadditive, which is exactly the conjunction
of the two distributive laws.

For the running time, the order checks cost $O(k^2\log n)=O(\log^3 n)$.
By Lemma~\ref{lem:odometer}, constructing all $k$ generator rows costs
$O(nk)=O(n\log n)$. The outer odometer sweep has $O(n)$ elementary increments,
each requiring an $O(n)$ row update, and the row comparisons examine $n^2$
entries in total. Thus the overall running time is $O(n^2)$.
\end{proof}

\begin{algorithm}[t]
\caption{\textsc{VerifyRing}$(A,M)$}
\label{alg:main}
\begin{algorithmic}[1]
\State run \textsc{VerifyAbelian}$(A)$; \textbf{reject} if it rejects
\State obtain the certified additive generators $g_1,\dots,g_k$
\State run \textsc{VerifyDistributive}$(M;\,\text{certified }G)$;
\textbf{reject} if it rejects
\For{$i,j,\ell\in[k]$}
    \State \textbf{reject} unless
    $M(M(g_i,g_j),g_\ell)=M(g_i,M(g_j,g_\ell))$
\EndFor
\State \textbf{accept}
\end{algorithmic}
\end{algorithm}

\paragraph{Associativity reduces to additive generators.}
Once Algorithm~\ref{alg:bilinear} accepts, multiplication is biadditive. Define
the associator
\[
F(x,y,z):=(xy)z-x(yz).
\]
Biadditivity of multiplication implies that $F$ is additive in each argument.
Therefore, writing
\[
x=\sum_i x_i\cdot g_i,\qquad
y=\sum_j y_j\cdot g_j,\qquad
z=\sum_\ell z_\ell\cdot g_\ell,
\]
we have
\[
F(x,y,z)
=
\sum_{i,j,\ell}
(x_i y_j z_\ell)\cdot F(g_i,g_j,g_\ell).
\]
Consequently, multiplication is associative on all of $G$ if and only if
\[
(g_i g_j)g_\ell=g_i(g_j g_\ell)
\qquad
\text{for every }i,j,\ell\in[k].
\]
Thus associativity can be verified using only
$O(k^3)=O(\log^3 n)$ checks.

\begin{theorem}[Ring verification]\label{thm:main}
Algorithm~\ref{alg:main} accepts if and only if $([n],A,M)$ is a ring.
It runs deterministically in $O(n^2)$ time.
\end{theorem}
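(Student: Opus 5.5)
The plan is to combine the three certified components: Theorem~\ref{thm:abelian}, Theorem~\ref{thm:bilinear}, and the associator reduction stated just above. We prove correctness in both directions and then add up the running times.

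\emph{Completeness.} Suppose $([n],A,M)$ is a ring. Then $([n],A)$ is an abelian group, so by Theorem~\ref{thm:abelian}, \textsc{VerifyAbelian} accepts and outputs certified generators $g_1,\dots,g_k$. Both distributive laws hold, so by Theorem~\ref{thm:bilinear}, \textsc{VerifyDistributive} accepts. Multiplication is associative on all of $[n]$, so in particular every generator triple check $M(M(g_i,g_j),g_\ell)=M(g_i,M(g_j,g_\ell))$ succeeds. Hence Algorithm~\ref{alg:main} accepts.

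\emph{Soundness.} Suppose Algorithm~\ref{alg:main} accepts.
\begin{enumerate}
\item Acceptance of \textsc{VerifyAbelian} certifies, via Theorem~\ref{thm:abelian}, that $([n],A)$ is an abelian group. Every $x$ has a representation $x=\sum_i \sigma(x)_i\cdot g_i$.
\item With this certification in place, acceptance of \textsc{VerifyDistributive} certifies that $M$ satisfies both distributive laws, i.e.\ $M$ is biadditive.
\item Define the associator $F(x,y,z)=(xy)z-x(yz)$, where subtraction is taken in the certified abelian group. Biadditivity of $M$ makes $F$ additive in each argument separately. For example, $F(x+x',y,z)=F(x,y,z)+F(x',y,z)$: expand $((x+x')y)z=(xy+x'y)z=(xy)z+(x'y)z$ using right distributivity twice, and expand $(x+x')(yz)$ similarly, then use commutativity of $+$.
\item Multi-additivity together with the generator expansion gives
\[
F(x,y,z)=\sum_{i,j,\ell}(x_iy_jz_\ell)\cdot F(g_i,g_j,g_\ell).
\]
The integer multiples here are repeated sums, so additivity alone suffices and no well-definedness issue arises.
\item The triple checks give $F(g_i,g_j,g_\ell)=0$ for all $i,j,\ell$. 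Hence $F\equiv 0$, so multiplication is associative, and the tables define a ring.
\end{enumerate}

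\emph{Running time.} \textsc{VerifyAbelian} runs in $O(n^2)$ time by Theorem~\ref{thm:abelian}, and \textsc{VerifyDistributive} runs in $O(n^2)$ time by Theorem~\ref{thm:bilinear}. The final loop makes $k^3\le\log^3 n$ iterations, each using four probes of $M$. The total is therefore $O(n^2)$.

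The only step requiring care is step 3, showing $F$ is additive in each coordinate. This is where the \emph{order} of verification matters: associativity is checked only after distributivity and the abelian group structure are certified, so the expansion of the associator is legitimate. Everything else is direct invocation of the earlier theorems.
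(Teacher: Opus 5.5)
Your proposal is correct and follows the paper's proof: completeness by invoking Theorems~\ref{thm:abelian} and~\ref{thm:bilinear} and then associativity on generator triples; soundness via triadditivity of the associator $F$ expanded over the certified generators $g_i=\tau(e_i)$; and the time bound $O(n^2)+O(\log^3 n)$. Your explicit check that $F$ is additive in each argument, which the paper only asserts, is a harmless extra detail.
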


\begin{proof}
If $([n],A,M)$ is a ring, then \textsc{VerifyAbelian} accepts by
Theorem~\ref{thm:abelian}, \textsc{VerifyDistributive} accepts by
Theorem~\ref{thm:bilinear}, and all generator triples satisfy associativity.
Hence the algorithm accepts.

Conversely, suppose the algorithm accepts. By Theorem~\ref{thm:abelian},
$([n],A)$ is an abelian group, and by Theorem~\ref{thm:bilinear}, multiplication
is biadditive. The final step verifies
\[
F(g_i,g_j,g_\ell)=0
\qquad
\text{for all }i,j,\ell\in[k].
\]
By the triadditivity of $F$, this implies $F(x,y,z)=0$ for all $x,y,z\in G$.
Thus multiplication is associative, and all ring axioms hold.

Finally, the first two steps take $O(n^2)$ time, while the associativity test
uses only $O(k^3)=O(\log^3 n)$ checks. Hence the total running time is
$O(n^2)$.
\end{proof}

\section{Verifying Fields in Linear Time}\label{sec:fields}

Field verification now follows immediately from our ring verifier. After
certifying the ring axioms, it remains only to check that multiplication is
commutative, that there is a multiplicative identity distinct from $0$, and
that every nonzero element has a multiplicative inverse. All of these
conditions can be checked directly in $O(n^2)$ time.

\begin{theorem}[Field verification]\label{thm:field}
Algorithm~\ref{alg:field} accepts if and only if $([n],A,M)$ is a field.
It runs deterministically in $O(n^2)$ time.
\end{theorem}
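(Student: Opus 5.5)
The plan is to treat Algorithm~\ref{alg:field} as \textsc{VerifyRing}$(A,M)$ followed by three direct checks: commutativity of $M$, existence of a multiplicative identity $1\neq 0$, and existence of multiplicative inverses for every nonzero element. Correctness then follows from Theorem~\ref{thm:main} together with the definition of a field, and the running time is the sum of the individual costs.

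For completeness, suppose $([n],A,M)$ is a field. Then it is in particular a ring, so \textsc{VerifyRing} accepts by Theorem~\ref{thm:main}. The additive identity $0=\tau(\mathbf 0)$ is certified as output by Theorem~\ref{thm:abelian}. Commutativity holds by definition, so the check $M(x,y)=M(y,x)$ passes for all pairs. The identity element $1\neq 0$ exists and the scan below finds it, since the identity is unique. Inverses exist for every $x\neq 0$, so the corresponding row scans succeed.

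For soundness, suppose the algorithm accepts. Theorem~\ref{thm:main} gives the ring axioms, and the three extra checks give exactly the remaining field axioms.

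For the running time, \textsc{VerifyRing} costs $O(n^2)$ by Theorem~\ref{thm:main}, and the commutativity check costs $n^2$ probes. The remaining two checks need a little care to stay within $O(n^2)$:
\begin{itemize}
\item \emph{Identity.} Testing each candidate $e$ against all $x$ naively costs $O(n^2)$ per candidate, which is too much. Instead, use commutativity to pick a candidate first: find some $e\neq 0$ with $M(e,y)=y$ for a fixed nonzero $y$. Better still, scan each row $e$ and accept it as the identity if $M(e,x)=x$ for all $x$. This costs $O(n)$ per row and hence $O(n^2)$ in total, after which $1\neq 0$ is checked directly.
\item \emph{Inverses.} For each $x\neq 0$, scan row $x$ for an entry equal to $1$. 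By commutativity, $xx^{-1}=1$ already implies $x^{-1}x=1$, so one scan per row suffices. This costs $O(n^2)$ in total.
\end{itemize}

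The only point requiring attention is the ordering: commutativity must be verified before the one-sided inverse and identity checks, so that they can legitimately be used as two-sided ones. Everything else is bookkeeping, so I expect no real obstacle.
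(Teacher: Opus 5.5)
Your proposal is correct and follows the paper's proof essentially verbatim: Theorem~\ref{thm:main} supplies the ring axioms, the extra checks supply commutativity, an identity $1\neq 0$, and two-sided inverses (via commutativity), and each check costs $O(n^2)$. Two small slips, neither of which affects correctness: testing one candidate $e$ against all $x$ costs $O(n)$ probes, not $O(n^2)$, so the naive two-sided scan in line 3 of Algorithm~\ref{alg:field} is already $O(n^2)$ overall; and the order of the checks is immaterial, since acceptance requires all of them to pass.
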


\begin{proof}
If $([n],A,M)$ is a field, then it is a ring, so
\textsc{VerifyRing} accepts by Theorem~\ref{thm:main}. The remaining checks
succeed because multiplication is commutative, has an identity $1\neq 0$, and
every nonzero element has a multiplicative inverse.

Conversely, suppose the algorithm accepts. By Theorem~\ref{thm:main},
$([n],A,M)$ is a ring. The remaining checks certify that multiplication is
commutative, has an identity $1\neq 0$, and that every $x\neq 0$ has some
$y$ with $xy=1$. By commutativity, also $yx=1$, so every nonzero element is
invertible. Hence $([n],A,M)$ is a field.

Finally, \textsc{VerifyRing} takes $O(n^2)$ time. Finding the multiplicative
identity, checking commutativity, and checking inverses each take at most
$O(n^2)$ time. Therefore the total running time is $O(n^2)$.
\end{proof}

\begin{algorithm}[t]
\caption{\textsc{VerifyField}$(A,M)$}
\label{alg:field}
\begin{algorithmic}[1]
\State run \textsc{VerifyRing}$(A,M)$; \textbf{reject} if it rejects
\State let $0$ be the additive identity certified by the internal call to
\textsc{VerifyAbelian}
\State find $1\in[n]$ such that $M(1,x)=M(x,1)=x$ for every $x\in[n]$;
\textbf{reject} if no such $1$ exists or if $1=0$
\State \textbf{reject} unless $M(x,y)=M(y,x)$ for every $x,y\in[n]$
\For{$x\in[n]\setminus\{0\}$}
    \State \textbf{reject} unless there exists $y\in[n]$ such that $M(x,y)=1$
\EndFor
\State \textbf{accept}
\end{algorithmic}
\end{algorithm}

\section*{Acknowledgements}
Ding was supported in part by a grant from the Israel Science Foundation (ISF Grant No. 1774/20), and by the European Union (ERC, SCALE,101162665). 
Views and opinions expressed are
however those of the author(s) only and do not necessarily reflect those of the
European Union or the European Research Council. Neither the European Union
nor the granting authority can be held responsible for them. 

\section*{AI Disclosure}
We used ChatGPT to assist with the technical presentation, organization, and refinement of the descriptions of our algorithms, with substantial use in Sections 4--6. All AI-assisted content was carefully reviewed and verified by the authors. The authors take full responsibility for the correctness, originality, and integrity of the manuscript, including all reported results and findings.

\bibliographystyle{alpha}
\bibliography{ref}

\end{document}